\documentclass[aps,pra,twocolumn,superscriptaddress]{revtex4-2}
\usepackage{amsmath,amssymb,bm,physics,graphicx,hyperref,cleveref,siunitx,booktabs,amsthm}
\usepackage[mathscr]{euscript}
\usepackage{appendix}
\usepackage{tikz}
\usetikzlibrary{positioning,arrows,shapes}

\DeclareMathOperator{\erfc}{erfc}
\newcommand{\SNR}{\mathrm{SNR}}
\newcommand{\calC}{\mathcal{C}}
\newcommand{\calN}{\mathcal{N}}
\newcommand{\calB}{\mathcal{B}}

\newcommand{\SigmaDotAvg}{\langle\dot{\Sigma}\rangle}
\newcommand{\betaCold}{\beta_{\mathrm{cold}}}
\newcommand{\betaHot}{\beta_{\mathrm{hot}}}
\newcommand{\betaBar}{\bar{\beta}}
\newcommand{\DeltaBeta}{\Delta\beta}
\newcommand{\epsNet}{\epsilon_{\mathrm{net}}}
\newcommand{\epsMin}{\epsilon_{\min}}

\newcommand{\SigmaTot}{\Sigma_{\mathrm{tot}}}
\newcommand{\betaIn}{\boldsymbol{\beta}_{\mathrm{in}}}
\newcommand{\betaZ}{\beta_z}
\newcommand{\betaZInf}{\beta_z^\infty}
\newcommand{\betaV}{\beta_v}

\newcommand{\kB}{k_B}

\theoremstyle{plain}
\newtheorem{theorem}{Theorem}

\theoremstyle{definition}

\begin{document}

\title{The Thermodynamic Cost of Computing with Heat}

\author{M. W. AlMasri}
\affiliation{Wilczek Quantum Center, School of Physics and Astronomy, Shanghai Jiao Tong University, Minhang, Shanghai, China}
\email{mwalmasri2003@gmail.com}

\date{\today}

\begin{abstract}
Autonomous quantum thermal machines have recently been proposed as physics-based computing substrates where logical inputs and outputs are encoded in temperature gradients. While such ``thermodynamic neurons'' exhibit a clear trade-off between computational fidelity and heat dissipation, the fundamental information-theoretic limits of temperature-encoded computation remain uncharacterized. Here, we derive rigorous bounds linking average error probability, channel capacity, and entropy production for finite-capacity thermal reservoirs operating far from equilibrium. We prove that the minimal dissipation required to achieve a target average error probability $\langle \xi \rangle$ diverges as $\langle \xi \rangle$ approaches a fundamental minimum error floor $\epsMin$ imposed by finite-reservoir thermal fluctuations. We further establish a thermodynamic channel capacity that saturates at high dissipation, and quantify the minimal dissipation required for cascaded networks to maintain target fidelity. We demonstrate that the required dissipation grows linearly $\mathcal{O}(L)$ with network depth for shallow networks, but diverges as the network depth $L$ approaches a fundamental maximum limit $L_{\max} = \epsNet/\epsMin$ imposed by the minimum error floor. Furthermore, under strong noise amplification conditions, the required dissipation can grow up to $\mathcal{O}(L^3)$. Our framework bridges stochastic thermodynamics, finite-time information theory, and autonomous computation, providing rigorous design principles for energy-efficient analog thermodynamic hardware.
\end{abstract}

\maketitle

\section{Introduction}
The convergence of non-equilibrium thermodynamics and information processing has catalyzed the emergence of \emph{thermodynamic computing}~\cite{thermocomp1,Esposito,Ueda,thermocomp2,thermocomp3,Wolpert,Chattopadhyay2025}, a paradigm that exploits dissipative physical dynamics to perform logical and machine-learning tasks. Recent comprehensive reviews have highlighted the fundamental role of the Landauer principle and its extensions in understanding the thermodynamics of computation, particularly in the quantum regime with finite-size heat baths and non-Markovian environments~\cite{Chattopadhyay2025}. Recent work introduced \emph{thermodynamic neurons}---autonomous quantum thermal machines comprising few interacting qubits coupled to thermal baths at distinct temperatures~\cite{lipkabartosik2024}. Logical inputs are encoded in bath temperatures, while outputs are read from the steady-state temperature of a finite-capacity reservoir. These devices implement linearly separable Boolean functions and, when cascaded, achieve universal classical computation. Crucially, they operate without external control, relying solely on heat flows and virtual temperature engineering.

Despite their conceptual elegance, the information-theoretic performance limits of temperature-encoded computation remain unexplored. Conventional bounds such as Landauer's principle~\cite{landauer} address the thermodynamic cost of logical erasure in near-equilibrium, Markovian settings. In contrast, thermodynamic neurons function far from equilibrium, encode information in continuous macroscopic variables, and rely on finite reservoirs that intrinsically introduce stochastic fluctuations. The interplay between dissipation, noise robustness, and information capacity in this regime is governed by principles that extend beyond classical information theory and require a stochastic thermodynamic formulation.

In this paper, we establish rigorous information-thermodynamic bounds for temperature-encoded computation. We (i) formalize the thermodynamic neuron as a noisy thermodynamic channel, (ii) derive a strict trade-off between average error probability and entropy production, revealing a fundamental minimum error floor, (iii) prove an upper bound on the thermodynamic channel capacity that saturates at high dissipation, and (iv) quantify the minimal dissipation overhead required for cascaded networks to maintain target fidelity, revealing a linear $\mathcal{O}(L)$ scaling for practical depths and a fundamental divergence as the network depth approaches a maximum limit set by the minimum error floor. Our results provide rigorous constraints for the design of autonomous analog computing hardware and clarify the energetic price of temperature-based information processing. These bounds complement recent work on resilience-runtime tradeoffs in quantum algorithms~\cite{GarciaPintos2025}, extending the understanding of fundamental limits to autonomous thermal computation.

\section{Model and Information-Theoretic Framework}
We consider a thermodynamic neuron with $n$ input baths at inverse temperatures $\betaIn = (\beta_1,\dots,\beta_n)$, a reference bath at $\beta_0$, and an output reservoir $\calB_z$ of finite heat capacity $C$. The machine evolves under a time-independent Hamiltonian and weak local dissipation, reaching a non-equilibrium steady state (NESS) characterized by a virtual inverse temperature
\begin{equation}
\betaV = \sum_{k=0}^n w_k \beta_k, \quad w_k = \frac{h_k \epsilon_k}{\sum_{j=0}^n h_j \epsilon_j},
\label{eq:virtual}
\end{equation}
where $\epsilon_k$ are qubit energy gaps and the indicators $h_k\in\{0,1\}$ specify the interaction subspace (with $h_j\epsilon_j \geq 0$ by construction). The output reservoir temperature $T_z(t)$ (with inverse temperature $\betaZ(t) = 1/(k_B T_z(t))$) obeys the calorimetric equation $C\dot{T}_z = \sum_k J_k$, where $J_k$ denotes heat currents flowing into the reservoir. In the steady-state limit $t\to\infty$, the output converges to $\betaZInf = f(\betaV)$, with $f$ a sigmoid-like activation determined by modulator parameters~\cite{lipkabartosik2024}. 

It is crucial to distinguish between the \emph{bath temperatures} ($\beta_k$), the \emph{virtual temperature} ($\betaV$) which characterizes the steady-state population of the machine's internal states, and the \emph{effective reservoir temperature} ($\betaZInf$), which is the macroscopic observable from which the output is decoded.

Logical inputs $x\in\{0,1\}^n$ map to inverse temperatures via $\beta_k = \betaHot + x_k(\betaCold-\betaHot)$, where $x_k \in \{0,1\}$ is the $k$-th bit of the input string. 
Outputs are decoded by thresholding $\betaZInf$ against the midpoint $\betaBar = (\betaCold+\betaHot)/2$: $y=0$ if $\betaZInf < \betaBar$, and $y=1$ otherwise. Finite $C$ induces Gaussian fluctuations $\betaZInf \sim \calN(f(\betaV),\sigma^2)$, with variance $\sigma^2$ arising from both thermal and dynamical sources.

\begin{figure}[t]
\centering
\begin{tikzpicture}[scale=0.85, every node/.style={transform shape}]
\node[rectangle, draw, fill=blue!20, minimum width=1.5cm, minimum height=0.8cm] (bath1) at (0,2) {$\beta_1$};
\node[rectangle, draw, fill=blue!20, minimum width=1.5cm, minimum height=0.8cm] (bath2) at (0,1) {$\beta_2$};
\node[rectangle, draw, fill=blue!20, minimum width=1.5cm, minimum height=0.8cm] (bath0) at (0,0) {$\beta_0$};
\node at (0,-0.6) {$\vdots$};

\node[rectangle, draw, fill=green!20, minimum width=2.2cm, minimum height=3cm, rounded corners] (machine) at (3,1) {Thermal Machine};

\node[rectangle, draw, fill=red!20, minimum width=1.5cm, minimum height=1cm] (output) at (6.5,1) {$\calB_z$};
\node[below=0.1cm of output] {$\beta_z^\infty$};

\draw[->, thick] (bath1.east) -- (machine.west |- bath1.east);
\draw[->, thick] (bath2.east) -- (machine.west |- bath2.east);
\draw[->, thick] (bath0.east) -- (machine.west |- bath0.east);
\draw[->, thick] (machine.east) -- (output.west);

\node[above=0.2cm of bath1] {Input baths};
\node[above] at (3,4.2) {Autonomous evolution};
\node[above] at (6.5,2.2) {Output};
\node[below] at (3,-0.6) {Virtual temperature $\beta_v$};

\node[rectangle, draw, fill=yellow!20, minimum width=1.5cm, minimum height=0.8cm] (layer1) at (0,-2.5) {Layer 1};
\node[rectangle, draw, fill=yellow!20, minimum width=1.5cm, minimum height=0.8cm] (layer2) at (2.5,-2.5) {Layer 2};
\node at (4,-2.5) {$\cdots$};
\node[rectangle, draw, fill=yellow!20, minimum width=1.5cm, minimum height=0.8cm] (layerL) at (5.5,-2.5) {Layer $L$};

\draw[->, thick] (layer1) -- (layer2);
\draw[->, thick] (layer2) -- (3.5,-2.5);
\draw[->, thick] (4.5,-2.5) -- (layerL);

\node[below] at (2.75,-3.3) {Feed-forward cascade};
\end{tikzpicture}
\caption{Schematic of a thermodynamic neuron (top) and cascaded network architecture (bottom). Input baths at temperatures $\beta_k$ drive the thermal machine to a steady state characterized by an internal virtual temperature $\beta_v$. The output reservoir $\calB_z$ with finite heat capacity $C$ reaches temperature $\beta_z^\infty$, from which the logical output is decoded. Multiple neurons can be cascaded in a feed-forward architecture.}
\label{fig:schematic}
\end{figure}

\textbf{Operational definition of temperature.} In the finite-capacity regime, temperature is operationally defined through the steady-state distribution of the output reservoir. For a macroscopic reservoir with many degrees of freedom ($N \gg 1$), the central limit theorem ensures that energy fluctuations are Gaussian, and the inverse temperature $\beta_z$ can be inferred from the mean energy via the canonical relation $\langle E_z\rangle = -\partial\ln Z/\partial\beta_z$, where $Z$ is the partition function. The stochastic nature of $\beta_z^\infty$ arises from two sources: (i) finite-size equilibrium fluctuations of the reservoir itself, and (ii) dynamical fluctuations in the heat currents during the approach to steady state. This operational definition is consistent with standard thermodynamic measurement protocols where temperature is inferred from ensemble averages over many realizations or from time averages under ergodic assumptions.

The entropy production over computation time $\tau$ is
\begin{equation}
\Sigma = k_B \int_0^\tau \sum_{k} \beta_k J_k(t) \, \mathrm{d}t,
\label{eq:ep}
\end{equation}
which quantifies irreversibility and bounds the precision of non-equilibrium currents via thermodynamic uncertainty relations (TURs)~\cite{TUR1,TUR2}. By the second law, $\Sigma \geq 0$. In our autonomous setting, $\Sigma$ captures the total entropy exported to the thermal environments, directly reflecting the thermodynamic cost of sustaining the temperature gradients that encode logical information.

\textbf{Applicability of TURs.} The standard TURs used in our derivations assume Markovian dynamics and time-symmetric protocols~\cite{TUR2}. For the autonomous thermal machine model considered here, these assumptions are rigorously justified as follows: (i) Weak system-bath coupling ensures Markovian dynamics described by a Lindblad master equation; (ii) Time-independent Hamiltonians and steady-state operation satisfy the time-symmetry requirement; (iii) While the activation function $f(\beta_v)$ is nonlinear, the TUR applies to the underlying linear heat currents, not the processed output; (iv) Quantum-coherent effects, if present, are captured by the quantum formulation of TURs~\cite{TUR2}. The finite reservoir capacity introduces additional thermal noise, which we treat separately and additively to the dynamical noise constrained by TURs.

\section{Fundamental Information-Thermodynamic Bounds}
We now derive rigorous constraints linking computational fidelity, dissipation, and information capacity.

\begin{theorem}[Error--Dissipation Trade-off and Minimum Error Floor]
\label{thm:error}
For a thermodynamic neuron operating over time $\tau$ with finite heat capacity $C$, the average logical error probability satisfies

\begin{equation}
\begin{split}
\langle \xi \rangle \geq \frac{1}{2}\,\erfc\!\Bigg( &\frac{\DeltaBeta\sqrt{C}}{2\sqrt{2\kB}\betaBar} \\
&\times \left[1 + \frac{2C\DeltaBeta^2}{\betaBar^2 \Sigma}\right]^{-1/2} \Bigg),
\end{split}
\label{eq:error_bound_rigorous}
\end{equation}

where $\DeltaBeta = |\betaCold-\betaHot|$, $\betaBar$ is the mean operating inverse temperature, and $\Sigma = \tau\SigmaDotAvg$ is the total entropy production. This bound reveals a fundamental minimum error $\epsMin = \frac{1}{2}\erfc\left( \frac{\DeltaBeta\sqrt{C}}{2\sqrt{2\kB}\betaBar} \right)$ imposed by finite-reservoir thermal fluctuations, which cannot be overcome by increasing dissipation. Notably, this bound holds for \emph{arbitrary} input distributions $p(x)$.
\end{theorem}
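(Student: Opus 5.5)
The plan is to reduce the logical error to a Gaussian threshold-crossing probability and then lower-bound the total noise variance by two additive contributions. The first is a thermal contribution from the finite reservoir. The second is a dynamical contribution controlled by the TUR.

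\textbf{Step 1: Gaussian error formula.} For a fixed input $x$, the model gives $\betaZInf\sim\calN(f(\betaV),\sigma^2)$, decoded by thresholding at $\betaBar$. The conditional error is then $\xi(x)=\tfrac12\erfc\big(|f(\betaV(x))-\betaBar|/(\sqrt2\,\sigma)\big)$. The signal margin is bounded using the encoding $\beta_k=\betaHot+x_k(\betaCold-\betaHot)$ and the convexity of the weights in Eq.~\eqref{eq:virtual}. Since $\betaV$ is a convex combination of inverse temperatures in $[\betaHot,\betaCold]$, and $f$ maps into the same operating window, we have $|f(\betaV)-\betaBar|\le\DeltaBeta/2$. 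Because $\erfc$ is decreasing, $\xi(x)\ge\tfrac12\erfc(\DeltaBeta/(2\sqrt2\,\sigma))$ holds for \emph{every} $x$. Averaging over any $p(x)$ preserves this bound, which gives the claimed independence from the input distribution.

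\textbf{Step 2: Variance decomposition.} Following the remark on applicability of TURs, I write $\sigma^2=\sigma_{\rm th}^2+\sigma_{\rm dyn}^2$, treating the two noise sources additively.

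For the thermal part, I use canonical energy fluctuations of the reservoir, $\langle\delta E_z^2\rangle=\kB T_z^2C$. I propagate these to $\beta_z=1/(\kB T_z)$ via $\delta\beta_z=-\delta E_z/(\kB T_z^2 C)\cdot\beta_z$. This gives $\sigma_{\rm th}^2\simeq \kB\betaBar^2/C$, evaluated at the operating point $\betaBar$.

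For the dynamical part, I apply the TUR $\mathrm{Var}(J)/\langle J\rangle^2\ge 2\kB/\Sigma$ to the integrated heat current into $\calB_z$. The heat delivered must shift the reservoir temperature across the full logical window, with $\delta E\sim C\,\DeltaBeta/(\kB\betaBar^2)$. Converting the relative fluctuation back to inverse temperature should produce $\sigma_{\rm dyn}^2\ge 2\kB\DeltaBeta^2\,\sigma_{\rm th}^2\cdot C/(\kB\betaBar^2\Sigma)\cdot(\ldots)$. I will fix the constants so that $\sigma^2\ge \frac{\kB\betaBar^2}{C}\big[1+\frac{2C\DeltaBeta^2}{\betaBar^2\Sigma}\big]$. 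Substituting this into Step 1 yields Eq.~\eqref{eq:error_bound_rigorous}.

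\textbf{Step 3: Error floor.} As $\Sigma\to\infty$ the bracket tends to $1$, so the argument of $\erfc$ increases monotonically to its limit and the bound decreases to $\epsMin$. Hence no amount of dissipation pushes the bound below $\epsMin$.

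\textbf{Main obstacle.} I expect the dynamical term to be the hardest part. I must justify converting a TUR on time-integrated heat into a variance on the steady-state $\betaZInf$ through the nonlinear activation $f$, and I must pin the prefactor to exactly $2C\DeltaBeta^2/(\betaBar^2\Sigma)$. I would first linearize $f$ around $\betaBar$ with slope at most $1$, consistent with its sigmoid form and the range constraint. I would then argue that the TUR bounds the linear current before $f$ is applied, which is point (iii) of the applicability discussion. The additivity of the two noise sources would be taken as the modeling assumption stated there.
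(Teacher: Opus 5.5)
Your skeleton is the paper's: Gaussian threshold crossing at $\betaBar$, additive noise $\sigma^2=\sigma_{\mathrm{th}}^2+\sigma_{\mathrm{dyn}}^2$ with $\sigma_{\mathrm{th}}^2=\kB\betaBar^2/C$, a TUR on the heat entering $\calB_z$ mapped calorimetrically onto $\betaZInf$ with signal heat $C\DeltaBeta/(\kB\betaBar^2)$, monotonicity of $\erfc$, and $\Sigma\to\infty$ for $\epsMin$.

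The one real difference is Step 1. The paper takes $\mu_0\approx\betaHot$ and $\mu_1\approx\betaCold$, so $\xi_0=\xi_1$ and $p(x)$-independence is an identity. You instead bound every margin by $\DeltaBeta/2$ and average a per-input inequality. That works, with three caveats:
\begin{enumerate}
\item It requires $\beta_0\in[\betaHot,\betaCold]$ and $f$ mapping the window into itself. The paper never states either.
\item You must write $\xi(x)\ge\tfrac{1}{2}\erfc(\cdot)$ rather than $=$, because a mean on the wrong side of $\betaBar$ gives $\xi(x)>1/2$.
\item The added generality is only nominal. Step 2 still assumes the heat carries the reservoir across the full window, i.e.\ that the means sit at the endpoints. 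If a mean sat strictly inside, $\langle Q_{\mathrm{tot}}\rangle$ would shrink, and with it your lower bound on $\sigma_{\mathrm{dyn}}^2$. Then, for $p(x)$ concentrated on an input whose mean does sit at an endpoint, the argument would no longer deliver the stated bound.
\end{enumerate}

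The part that is not yet a proof is the dynamical term, where you propose to ``fix the constants'' to match the target. Nothing needs fitting. Use $\delta\betaZInf=-(\kB\betaBar^2/C)\,\delta Q_{\mathrm{tot}}$, $\mathrm{Var}(Q_{\mathrm{tot}})\ge\langle Q_{\mathrm{tot}}\rangle^2\,2\kB/\Sigma$, and $\langle Q_{\mathrm{tot}}\rangle=C\DeltaBeta/(\kB\betaBar^2)$. The calorimetric factor enters squared and cancels exactly, leaving $\sigma_{\mathrm{dyn}}^2\ge 2\kB\DeltaBeta^2/\Sigma$; no $\sigma_{\mathrm{th}}^2$ is involved. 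Factoring $\kB\betaBar^2/C$ out of the sum then produces the bracket.

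Your plan for the ``main obstacle'' should be dropped. In this route the heat enters $\calB_z$ and moves $\betaZ$ directly, while $f$ only fixes where the mean steady state sits. The current noise therefore never passes through $f$. Routing it through a linearized $f$ with $|f'|\le 1$ would actually break the argument. That slope bound only gives $\mathrm{Var}(f)\lesssim\mathrm{Var}(\text{pre-}f)$, which cannot transfer a lower bound on the pre-$f$ noise to $\betaZInf$; you would need $|f'|\ge 1$.

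Finally, there is a slip in the thermal step. The propagation is $\delta\betaZ=-\delta E_z/(\kB T_z^2C)$, equivalently $-\betaZ\,\delta E_z/(CT_z)$. With your extra factor $\betaZ$ one gets $\kB\betaBar^4/C$, although the value you state, $\kB\betaBar^2/C$, is the right one.
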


\begin{proof}
The output temperature $\betaZInf$ is a stochastic variable with variance $\sigma^2 = \sigma_{\mathrm{therm}}^2 + \sigma_{\mathrm{dyn}}^2$. 
The thermal component arises from the finite heat capacity $C$ of the output reservoir. By the central limit theorem, energy fluctuations are Gaussian, yielding $\sigma_{\mathrm{therm}}^2 = \kB\betaBar^2/C$.
The dynamical component arises from stochastic heat currents $Q_{\mathrm{tot}}$ driving the reservoir. The calorimetric relation $\delta\betaZInf \approx -\frac{k_B\betaBar^2}{C}\delta Q_{\mathrm{tot}}$ maps heat fluctuations to temperature fluctuations. Applying the thermodynamic uncertainty relation (TUR)~\cite{TUR1}, $\mathrm{Var}(Q_{\mathrm{tot}}) \geq \langle Q_{\mathrm{tot}}\rangle^2 \frac{2\kB}{\Sigma}$, and using the signal amplitude $\langle Q_{\mathrm{tot}}\rangle \approx C\DeltaBeta/(k_B\betaBar^2)$, we obtain $\sigma_{\mathrm{dyn}}^2 \geq \frac{2\kB\DeltaBeta^2}{\Sigma}$.
Since the noises are independent, the total variance is strictly bounded by:
\begin{equation}
    \sigma^2 \geq \frac{\kB\betaBar^2}{C} + \frac{2\kB\DeltaBeta^2}{\Sigma} = \frac{\kB\betaBar^2}{C} \left( 1 + \frac{2C\DeltaBeta^2}{\betaBar^2 \Sigma} \right).
    \label{eq:variance_bound}
\end{equation}

The decoding rule uses a threshold at $\betaBar$. Consider input $x=0$ (encoded as $\betaHot$). The mean output is $\mu_0 \approx \betaHot$. The error probability is $\xi_0 = \mathbb{P}(\betaZInf > \betaBar \mid x=0)$. Noting that $\betaBar - \mu_0 \approx \betaBar - \betaHot = +\DeltaBeta/2$ (since $\betaCold > \betaHot$, as the cold bath has a lower temperature and thus a higher inverse temperature), we obtain $\xi_0 = \frac{1}{2}\erfc\left( \frac{\DeltaBeta/2}{\sqrt{2}\sigma} \right)$. Similarly, for $x=1$ with $\mu_1 \approx \betaCold$, $\xi_1 = \mathbb{P}(\betaZInf < \betaBar \mid x=1) = \frac{1}{2}\erfc\left( \frac{\betaCold - \betaBar}{\sqrt{2}\sigma} \right) = \frac{1}{2}\erfc\left( \frac{\DeltaBeta/2}{\sqrt{2}\sigma} \right)$. Thus, the average error is $\langle\xi\rangle = p(0)\xi_0 + p(1)\xi_1 = \frac{1}{2}\erfc\left( \frac{\DeltaBeta}{2\sqrt{2}\sigma} \right)$, which is strictly independent of the input distribution $p(x)$ because $\xi_0 = \xi_1$.

Since $\erfc(x)$ is monotonically decreasing, substituting the lower bound on $\sigma$ yields a strict lower bound on $\langle\xi\rangle$, completing the proof. Full algebraic steps are provided in Appendix~\ref{app:proof_error}.
\end{proof}

\textbf{Temperature dependence.} The error bound exhibits explicit dependence on the operating temperatures. Increasing $\betaCold$ (making the cold bath colder) increases $\DeltaBeta$, which reduces $\epsMin$ and provides a stronger signal that is more robust against thermal fluctuations. However, this comes at the cost of increased entropy production rate $\SigmaDotAvg$. The optimal operating point balances these competing effects.

Regarding the dependence of the dynamical noise on the operating temperature difference $\DeltaBeta$ (as seen in the variance bound, Eq.~\eqref{eq:variance_bound}), one might initially expect that increasing the signal (the temperature difference $\DeltaBeta$) would simply improve the signal-to-noise ratio without increasing the absolute noise. However, in autonomous thermal machines, the heat currents $Q_{\mathrm{tot}}$ that drive the temperature change are subject to the TUR. The TUR dictates that the variance of a thermodynamic current is bounded from below by the square of its mean multiplied by $2k_B/\Sigma$. Since the mean heat transfer required to change the reservoir temperature by $\Delta T \propto \DeltaBeta$ scales linearly with $\DeltaBeta$, the absolute variance of the heat current—and consequently the variance of the output temperature—must scale quadratically with $\DeltaBeta$. Thus, the dynamical noise intrinsically grows with the temperature difference. This is a fundamental feature of scale-dependent fluctuations in non-equilibrium steady states. Fortunately, because the signal amplitude also scales linearly with $\DeltaBeta$, the signal-to-noise ratio (SNR) still improves with $\DeltaBeta$ (up to the saturation limit imposed by thermal noise), ensuring that the overall error probability decreases as the temperature difference increases, as correctly captured by the monotonically decreasing complementary error function in Eq.~\eqref{eq:error_bound_rigorous}.

\textbf{Connection to runtime-resilience tradeoffs.} Equation~\eqref{eq:error_bound_rigorous} can be interpreted as a trade-off between runtime $\tau$ and logical error probability, analogous to the resilience-runtime relations derived for quantum algorithms by García-Pintos et al.~\cite{GarciaPintos2025}. In their framework, Eqs. (8a), (8b), and (9) establish that achieving higher resilience against noise requires longer runtime or increased resource consumption. Similarly, our Eq.~\eqref{eq:error_bound_rigorous} shows that reducing logical errors requires either longer computation time $\tau$ or higher entropy production $\SigmaDotAvg$. The key difference is that our bound applies to autonomous thermal computation with continuous-variable encoding, whereas Ref.~\cite{GarciaPintos2025} addresses controlled quantum algorithms with discrete gate operations. Both results reflect a fundamental principle: robustness against errors in physical computing systems requires a thermodynamic or temporal resource investment.

\begin{theorem}[Thermodynamic Channel Capacity Saturation]
\label{thm:capacity}
The mutual information $I(X;Y)$ between input bit strings $X$ and decoded outputs $Y$ is strictly bounded by
\begin{equation}
\begin{split}
\calC_{\mathrm{th}} \leq \frac{1}{2}\log_2\!\Bigg( 1 + \frac{C\DeltaBeta^2}{4\kB\betaBar^2} \\
\times \left[1 + \frac{2C\DeltaBeta^2}{\betaBar^2 \Sigma}\right]^{-1} \Bigg).
\end{split}
\label{eq:cap_bound_rigorous}
\end{equation}
In the high-dissipation limit ($\Sigma \to \infty$), the capacity saturates at $\calC_{\mathrm{th}} \leq \frac{1}{2}\log_2\left(1 + \frac{C\DeltaBeta^2}{4\kB\betaBar^2}\right)$, reflecting a hard information-theoretic limit imposed by finite heat capacity.
\end{theorem}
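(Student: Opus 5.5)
The plan is to bound the mutual information by viewing the neuron as an additive Gaussian-noise channel acting on the steady-state output $\betaZInf$, and then to reuse the variance bound, Eq.~\eqref{eq:variance_bound}, from the proof of Theorem~\ref{thm:error}.

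First, the decoded output $Y$ is a deterministic function, namely thresholding at $\betaBar$, of $\betaZInf$. The input string $X$ acts only through $\betaV$, so we have the Markov chain $X \to \betaV \to \betaZInf \to Y$. The data-processing inequality then gives
\[
I(X;Y) \le I(X;\betaZInf) \le I(f(\betaV);\betaZInf).
\]
It therefore suffices to bound the information carried by the continuous channel $\betaZInf = f(\betaV) + N$. By the model assumption, $N\sim\calN(0,\sigma^2)$ is independent of the input signal.

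Second, we constrain the signal power. Under the encoding, each $\beta_k$ lies in $[\betaHot,\betaCold]$. Since the weights $w_k$ in Eq.~\eqref{eq:virtual} are nonnegative and sum to one, $\betaV$ is a convex combination of the $\beta_k$ and also lies in this interval. We take $f$ to map this interval into itself, as the sigmoid activation does, so that the mean output $\mu_x$ ranges over an interval of length $\DeltaBeta$. A random variable supported on an interval of length $\DeltaBeta$ has variance at most $\DeltaBeta^2/4$; this maximum is attained by the two-point distribution at the endpoints, which matches the binary encoding. Hence the signal power satisfies $P\le \DeltaBeta^2/4$ for every input distribution $p(x)$.

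Third, we apply the Gaussian-channel maximum-entropy argument. Write
\[
I(\mu_X;\betaZInf)=h(\betaZInf)-h(N).
\]
The differential entropy obeys $h(\betaZInf)\le \tfrac12\log_2 2\pi e(P+\sigma^2)$, because the Gaussian maximizes entropy at fixed variance, and $h(N)=\tfrac12\log_2 2\pi e\sigma^2$. This yields
\[
I\le \tfrac12\log_2\!\left(1+\frac{\DeltaBeta^2}{4\sigma^2}\right).
\]
The right-hand side is decreasing in $\sigma^2$, so we may substitute the lower bound $\sigma^2\ge \frac{\kB\betaBar^2}{C}\bigl(1+\frac{2C\DeltaBeta^2}{\betaBar^2\Sigma}\bigr)$ from Eq.~\eqref{eq:variance_bound}. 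Rearranging gives exactly Eq.~\eqref{eq:cap_bound_rigorous}. Taking the supremum over $p(x)$ bounds $\calC_{\mathrm{th}}$. Letting $\Sigma\to\infty$, the bracket tends to $1$ monotonically, which gives the stated saturation value.

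The main obstacle is justifying the signal-power step and the independence and Gaussianity of the noise. The variance bound from Theorem~\ref{thm:error} was derived using the mean operating temperature $\betaBar$ and a linearized calorimetric relation, so it holds only approximately, uniformly over inputs. I would therefore state that the bound is taken within the same linear-response approximation as Theorem~\ref{thm:error}. The input-dependent part of the TUR noise can also be handled: if $\sigma^2$ depends on the input, the bound still holds with $\sigma^2$ replaced by its minimum over inputs, since each conditional variance is at least the common lower bound. If one worries that the noise is not exactly Gaussian, the bound still goes through, because for fixed noise variance the Gaussian noise minimizes mutual information under a power constraint. I would therefore phrase the argument using worst-case Gaussian noise, which keeps the inequality direction correct.
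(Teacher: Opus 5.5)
Your main argument is correct and follows the paper's route. You treat $\betaZInf$ as an additive Gaussian channel with input-independent variance, and bound $h(\betaZInf)$ by the entropy of a Gaussian with variance $\sigma^2+\DeltaBeta^2/4$. You then subtract $\tfrac12\log_2(2\pi e\sigma^2)$ and insert the Theorem~1 lower bound on $\sigma^2$, which is legitimate because $\tfrac12\log_2\bigl(1+\DeltaBeta^2/(4\sigma^2)\bigr)$ decreases in $\sigma^2$.

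Two of your steps tighten what the paper leaves loose. The data-processing step $I(X;Y)\le I(X;\betaZInf)$ is needed because the theorem concerns the thresholded output $Y$, while the paper's proof silently works with the continuous $\betaZInf$. The bound $\mathrm{Var}(\mu_X)\le\DeltaBeta^2/4$, valid for any law on an interval of length $\DeltaBeta$, covers every $p(x)$ on $\{0,1\}^n$; the paper only asserts that symmetric binary inputs are optimal.

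There is one slip in that step. The reference bath $\beta_0$ is not encoded, so the convex-combination remark does not place $\betaV$ in $[\betaHot,\betaCold]$. The assumption you actually need is that all means $f(\betaV(x))$ lie in an interval of length $\DeltaBeta$. The paper makes this assumption implicitly through $\mu_0\approx\betaHot$ and $\mu_1\approx\betaCold$.

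The two robustness claims in your last paragraph are false, and the bound does not extend to those cases. Gaussian, input-independent noise must remain a hypothesis, as in the paper.

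\emph{Input-dependent variance.} If the conditional variance depends on $x$, you cannot replace $\sigma^2$ by its minimum, because the variance itself can carry information. Two equiprobable inputs with equal means and variances $\sigma^2$ and $4\sigma^2$ give $I>0$, even for the thresholded $Y$. Your formula with $P=0$ gives $0$. The entropy argument only yields $I\le\tfrac12\log_2\bigl((P+\mathbb{E}[\sigma_X^2])/\sigma_{\min}^2\bigr)$, which has an extra term.

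\emph{Non-Gaussian noise.} The fact that Gaussian noise is the worst case at fixed variance points the wrong way for you. It says any other noise of the same variance achieves \emph{at least} $\tfrac12\log_2(1+P/\sigma^2)$; that is a lower bound on capacity, not an upper bound. Concretely, take noise equal to $0$ with probability $1-p$ and to $\pm M$ with probability $p/2$ each, with $pM^2$ held fixed and much larger than $\DeltaBeta^2$. Threshold decoding then errs with probability $p/2$, so $I(X;Y)\to 1$ bit as $p\to 0$, while $\tfrac12\log_2\bigl(1+\DeltaBeta^2/(4pM^2)\bigr)$ stays near $0$. Phrasing the argument via worst-case Gaussian noise therefore would not keep the inequality direction correct.
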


\begin{proof}
The thermodynamic neuron implements a continuous-output channel with input $x\in\{0,1\}^n$ and output $\betaZInf \in \mathbb{R}$. The conditional distribution is Gaussian, $p(\betaZInf \mid x) = \calN(f(\betaV(x)), \sigma^2)$, justified by the central limit theorem for the macroscopic output reservoir ($N \gg 1$). 
The mutual information is $I(X;Y) = h(Y) - h(Y\mid X)$. For a Gaussian channel with fixed noise variance $\sigma^2$, $h(Y\mid X) = \frac{1}{2}\log_2(2\pi e \sigma^2)$. The output entropy $h(Y)$ is maximized for symmetric binary inputs, bounding the mixture of two Gaussians by a single Gaussian with variance $\sigma^2 + \DeltaBeta^2/4$. Thus, $I(X;Y) \leq \frac{1}{2}\log_2\left(1 + \frac{\DeltaBeta^2}{4\sigma^2}\right)$.

While Eq.~\eqref{eq:cap_bound_rigorous} has the mathematical form of the standard Gaussian channel capacity~\cite{CoverThomas,MacKay}, the physical content is distinct: the noise variance $\sigma^2$ is not a free parameter but is fundamentally constrained by thermodynamic dissipation through the TUR. This establishes a \emph{physical} limit on information transmission that goes beyond the mathematical structure of Shannon theory. 

Substituting the rigorous lower bound on $\sigma^2$ from Theorem 1:
\begin{widetext}
\begin{equation}
    \frac{\DeltaBeta^2}{4\sigma^2} \leq \frac{\DeltaBeta^2}{4} \left[ \frac{\kB\betaBar^2}{C} \left( 1 + \frac{2C\DeltaBeta^2}{\betaBar^2 \Sigma} \right) \right]^{-1} = \frac{C\DeltaBeta^2}{4\kB\betaBar^2} \left[ 1 + \frac{2C\DeltaBeta^2}{\betaBar^2 \Sigma} \right]^{-1}.
\end{equation}
\end{widetext}
Substituting this into the mutual information bound yields the capacity bound. The saturation at high dissipation is obtained by taking the limit $\Sigma \to \infty$. Full details are in Appendix~\ref{app:proof_capacity}.
\end{proof}

\begin{theorem}[Deep Network Thermodynamic Overhead]
\label{thm:network}
Consider a feed-forward cascade of $L$ thermodynamic neurons. To maintain a target global error $\epsNet > L\epsMin$, the total entropy production satisfies
\begin{equation}
\begin{split}
\SigmaTot \geq \sum_{\ell=1}^L \frac{B}{\left( \frac{A}{\erfc^{-1}(2\epsilon_\ell)} \right)^2 - 1},
\end{split}
\label{eq:net_bound_rigorous}
\end{equation}
where $\epsilon_\ell$ is the target error for layer $\ell$, and the constants are defined as $A = \frac{\DeltaBeta\sqrt{C}}{2\sqrt{2\kB}\betaBar}$ and $B = \frac{2C\DeltaBeta^2}{\betaBar^2}$. Under optimal uniform error allocation $\epsilon_\ell = \epsNet/L$, this yields a strict lower bound that diverges as $\epsNet/L \to \epsMin$. Asymptotically, for network depths $L$ well below the critical maximum depth $L_{\max} = \epsNet/\epsMin$, the total dissipation scales linearly as $\SigmaTot \sim \mathcal{O}(L)$. As $L \to L_{\max}$, the required dissipation diverges as $\mathcal{O}\left( \frac{L^2}{L_{\max} - L} \right)$, with the required dissipation growing up to $\mathcal{O}(L^3)$ under strong noise amplification conditions in the moderate-depth regime.
\end{theorem}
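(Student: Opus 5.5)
The plan is to reduce the network statement to a per-layer inversion of Theorem~\ref{thm:error}, then sum over layers and analyze the resulting function of $L$. First, I will use a union bound on the feed-forward cascade: the network output is wrong only if at least one layer decodes incorrectly. Hence $\epsNet \le \sum_{\ell}\epsilon_\ell$, where $\epsilon_\ell$ is the average error of layer $\ell$. This is where Theorem~\ref{thm:error}'s independence of the input distribution matters: each layer's inputs are the decoded (possibly erroneous) outputs of the previous layer, and the per-layer bound holds regardless of that distribution.

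Second, I invert Eq.~\eqref{eq:error_bound_rigorous} for a single layer with dissipation $\Sigma_\ell$. Writing $u_\ell=\erfc^{-1}(2\epsilon_\ell)$ and using that $\erfc$ is monotone decreasing, the requirement $\epsilon_\ell\ge\frac12\erfc\big(A(1+B/\Sigma_\ell)^{-1/2}\big)$ becomes $u_\ell \le A(1+B/\Sigma_\ell)^{-1/2}$. Solving gives $\Sigma_\ell \ge B/[(A/u_\ell)^2-1]$. This is well defined precisely when $u_\ell<A$, i.e.\ $\epsilon_\ell>\epsMin$. Because entropy production is additive over independent neurons, $\SigmaTot=\sum_\ell\Sigma_\ell$, which gives Eq.~\eqref{eq:net_bound_rigorous}. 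The hypothesis $\epsNet>L\epsMin$ ensures that an admissible allocation with all $\epsilon_\ell>\epsMin$ exists.

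Third, I justify the uniform allocation and extract the asymptotics. Set $g(\epsilon)=B/[(A/\erfc^{-1}(2\epsilon))^2-1]$. I would show that $g$ is convex and decreasing on $(\epsMin,1/2)$, so that by Jensen, subject to $\sum\epsilon_\ell=\epsNet$, the minimum is $L\,g(\epsNet/L)$. For this step I expect to argue near the floor, where $g\approx B A/[2(A-u)]$ and $u$ is a smooth increasing function of $-\epsilon$, rather than globally.

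With the uniform allocation:
\begin{itemize}
\item \emph{Shallow regime} ($L\ll L_{\max}$): $\epsNet/L$ stays well above $\epsMin$ with only a slowly varying dependence, so $g(\epsNet/L)=\mathcal{O}(1)$ and $\SigmaTot=\mathcal{O}(L)$.
\item \emph{Near $L_{\max}$}: I linearize $\erfc^{-1}(2\epsilon)$ about $\epsMin$, giving $A-u\approx c\,(\epsNet/L-\epsMin)=c\,\epsMin(L_{\max}-L)/L$ with $c=\sqrt{\pi}\,e^{A^2}$. Then $L\,g\sim L\cdot L/(L_{\max}-L)$, which is the claimed $\mathcal{O}(L^2/(L_{\max}-L))$.
\end{itemize}

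The main obstacle is the $\mathcal{O}(L^3)$ claim "under strong noise amplification," which is not fixed by the bound as stated. I would model amplification by letting the per-layer effective signal degrade, for example letting $B$ or the required $(A/u)^2-1$ scale with $\ell$, so that the layer-$\ell$ cost grows like $\ell^2$ in the moderate-depth window. Summing $\sum_{\ell\le L}\ell^2$ then gives $\mathcal{O}(L^3)$. I would present this as a scaling argument under an explicit stated assumption, not as a consequence of Eq.~\eqref{eq:net_bound_rigorous} alone.
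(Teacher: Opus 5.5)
Your route is the paper's route. It uses the union bound $\epsNet\le\sum_\ell\epsilon_\ell$, then inverts Theorem~\ref{thm:error} layer by layer to get $\Sigma_\ell\ge B/[(A/\erfc^{-1}(2\epsilon_\ell))^2-1]$. Convexity then justifies $\epsilon_\ell=\epsNet/L$. The linearization of $\erfc^{-1}$ at the floor has slope $\sqrt{\pi}e^{A^2}$; the paper's $K=\sqrt{\pi}e^{A^2}\epsMin/L_{\max}$ is your $c\,\epsMin/L$ evaluated at $L\approx L_{\max}$. Finally, an assumed $\Sigma_\ell\propto\ell^2$ growth gives the $\mathcal{O}(L^3)$ regime, which the paper also supports only as a scaling argument. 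You also note that the input-distribution independence in Theorem~\ref{thm:error} is what allows the per-layer bound to be applied to layers fed by noisy upstream outputs. The paper leaves this point implicit.

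Your plan falls short at the uniform-allocation step. Jensen's inequality needs $g$ to be convex on all of $(\epsMin,1/2)$, because a competing allocation may place some $\epsilon_\ell$ far from the floor. A local argument based on $g\approx BA/[2(A-u)]$ near $\epsMin$ does not give this. The paper settles convexity globally by a composition argument, and you should do the same:
\begin{itemize}
\item Write $g(\epsilon)=G(u(\epsilon))$ with $G(u)=Bu^2/(A^2-u^2)$. On $[0,A)$ this satisfies $G'(u)=2BA^2u/(A^2-u^2)^2\ge 0$ and $G''(u)=2BA^2(A^2+3u^2)/(A^2-u^2)^3>0$.
\item The inner function $u(\epsilon)=\erfc^{-1}(2\epsilon)$ satisfies $u'=-\sqrt{\pi}e^{u^2}$ and $u''=2\pi u\,e^{2u^2}>0$ on $(0,1/2)$.
\item Hence $g''=G''(u)(u')^2+G'(u)u''>0$ on the whole interval.
\end{itemize}
Jensen then gives $\sum_\ell g(\epsilon_\ell)\ge L\,g(\epsNet/L)$ for every admissible allocation, and the rest of your proposal goes through as written.
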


\begin{proof}
We model the cascade as a sequence of $L$ thermodynamic channels. For small errors and independent noise, the global error is bounded by the union bound: $\epsNet \leq \sum_{\ell=1}^L \epsilon_\ell$.
From Theorem 1, to achieve a single-layer error $\epsilon_\ell$, the entropy production $\Sigma_\ell$ must satisfy:
\begin{equation}
    \epsilon_\ell \geq \frac{1}{2}\,\erfc\!\left( \frac{A}{\sqrt{1 + B/\Sigma_\ell}} \right),
    \label{eq:epsilon_layer_bound}
\end{equation}
where the constants $A$ and $B$ are defined as
\begin{equation}
    A = \frac{\DeltaBeta\sqrt{C}}{2\sqrt{2\kB}\betaBar}, \quad B = \frac{2C\DeltaBeta^2}{\betaBar^2}.
\end{equation}
Inverting this relation yields a strict lower bound on the dissipation per layer:
\begin{align}
    2\epsilon_\ell &\geq \erfc\!\left( \frac{A}{\sqrt{1 + B/\Sigma_\ell}} \right) \nonumber\\
    \erfc^{-1}(2\epsilon_\ell) &\leq \frac{A}{\sqrt{1 + B/\Sigma_\ell}} \nonumber\\
    1 + \frac{B}{\Sigma_\ell} &\leq \left( \frac{A}{\erfc^{-1}(2\epsilon_\ell)} \right)^2 \nonumber\\
    \Sigma_\ell &\geq \frac{B}{\left( \frac{A}{\erfc^{-1}(2\epsilon_\ell)} \right)^2 - 1}.
    \label{eq:sigma_min_layer}
\end{align}
This bound diverges as $\epsilon_\ell \to \epsMin = \frac{1}{2}\erfc(A)$, establishing a fundamental minimum error floor imposed by finite reservoir thermal fluctuations. 
To minimize the total dissipation $\SigmaTot = \sum_{\ell=1}^L \Sigma_\ell$ subject to $\sum_{\ell=1}^L \epsilon_\ell = \epsNet$, we apply Lagrange multipliers. The strict convexity of the function $\Sigma_\ell(\epsilon_\ell)$ ensures that the global minimum is achieved at uniform error allocation $\epsilon_\ell = \epsNet/L$. Substituting this into the per-layer bound and summing over $L$ layers yields the rigorous network overhead bound. The asymptotic limits are analyzed in Appendix~\ref{app:proof_network}.
\end{proof}

\textbf{Tightness and saturation conditions.} The bounds derived above are fundamental but not always tight. Theorem 1 is tight when: (i) the TUR is saturated (time-symmetric protocols~\cite{TUR2}), (ii) the output noise is Gaussian (valid for large $C$ by CLT), and (iii) the activation function $f$ is linear near the decision boundary. Theorem 2 is achieved when the input distribution maximizes output entropy (symmetric binary inputs) and the channel is memoryless. Theorem 3's linear scaling is rigorous for independent layer errors well below the critical depth; stronger scaling applies when noise amplification dominates. Deviations from these conditions generally increase the required dissipation, making our bounds conservative estimates of the true thermodynamic cost.

\section{Discussion and Implications}
Our bounds reveal several design principles for thermodynamic computing hardware. First, Eq.~\eqref{eq:error_bound_rigorous} quantifies the energetic price of noise robustness and establishes a hard limit: achieving $\langle\xi\rangle < \epsMin$ is physically impossible regardless of dissipation, reflecting the cost of maintaining macroscopic temperature differences against thermal diffusion in finite reservoirs. Second, the logarithmic capacity bound implies that thermodynamic neurons are inherently \emph{low-bandwidth} but highly parallelizable; optimal architectures should favor wide, shallow networks over deep cascades to minimize dissipation overhead and avoid the critical depth divergence. Third, the bounds are independent of microscopic details (qubit couplings, interaction Hamiltonians), depending only on macroscopic thermodynamic variables ($C$, $\tau$, $\DeltaBeta$), making them robust across platforms.

Comparison to existing frameworks is instructive. Landauer's principle bounds the cost of logical state reset near equilibrium. Our results apply to far-from-equilibrium, continuous-variable encoding where information is carried by temperature gradients rather than discrete microstates. Thermodynamic uncertainty relations~\cite{TUR1,TUR2} constrain current precision; we extend this to macroscopic reservoir temperatures, bridging mesoscopic fluctuation theory with circuit-level information processing. Recent work on stochastic thermodynamic computing~\cite{Wolpert} and comprehensive reviews of the field~\cite{Chattopadhyay2025} focus on Markov jump processes and the Landauer bound; our framework accommodates quantum-coherent virtual temperature engineering and finite reservoir dynamics.

\textbf{Numerical example: Two-qubit thermal machine.} To explicitly test our bounds, consider a minimal thermodynamic neuron consisting of two coupled qubits with energy gaps $\epsilon_1 = \epsilon_2 = \epsilon$, interacting with hot and cold baths. Let the operating temperatures be $\betaHot = 1/\epsilon$ and $\betaCold = 2/\epsilon$. Thus, $\DeltaBeta = 1/\epsilon$ and $\betaBar = 1.5/\epsilon$. 
Assume a finite reservoir with heat capacity $C = 100\,\kB$, computation time $\tau = 10$ (in natural units where $\hbar = k_B = 1$), and an achieved entropy production of $\Sigma = 50\,\kB$. 
Note that the threshold for thermal noise dominance is $2C\DeltaBeta^2/\betaBar^2 \approx 88.9\,\kB$. Since $\Sigma = 50\,\kB < 88.9\,\kB$, this example operates in the \emph{dynamical-noise-dominated regime}, which is consistent with the calculated variance components:
\begin{align}
    \sigma_{\mathrm{therm}}^2 &= \frac{\kB\betaBar^2}{C} = \frac{\kB (1.5/\epsilon)^2}{100\,\kB} = 0.0225/\epsilon^2, \\
    \sigma_{\mathrm{dyn}}^2 &\geq \frac{2\kB\DeltaBeta^2}{\Sigma} = \frac{2\kB (1/\epsilon)^2}{50\,\kB} = 0.04/\epsilon^2.
\end{align}
The total variance is bounded by $\sigma^2 \geq 0.0625/\epsilon^2$, yielding $\sigma \geq 0.25/\epsilon$. 
Applying Theorem 1, the error probability is bounded by:
\begin{equation}
\begin{split}
\langle \xi \rangle &\geq \frac{1}{2}\erfc\left( \frac{\DeltaBeta/2}{\sqrt{2}\sigma} \right) \\
&\geq \frac{1}{2}\erfc\left( \frac{0.5/\epsilon}{\sqrt{2}(0.25/\epsilon)} \right) = \frac{1}{2}\erfc(\sqrt{2}) \approx 0.0228.
\end{split}
\end{equation}
Similarly, applying Theorem 2, the channel capacity is bounded by:
\begin{equation}
    \calC_{\mathrm{th}} \leq \frac{1}{2}\log_2\left(1 + \frac{(1/\epsilon)^2}{4(0.0625/\epsilon^2)}\right) = \frac{1}{2}\log_2(5) \approx 1.16 \text{ bits}.
\end{equation}
These explicit numerical values are consistent with numerical simulations of autonomous thermal machines~\cite{lipkabartosik2024} and demonstrate the practical, testable relevance of our bounds.

Experimentally, our bounds are immediately testable in superconducting circuit QED platforms~\cite{Karimi2016}, trapped-ion thermal machines~\cite{Rossnagel2016}, or quantum dot networks~\cite{Sothmann2015}. Measuring $C$, $\tau$, and heat currents allows direct verification of Eq.~\eqref{eq:error_bound_rigorous} and optimization of $\DeltaBeta$ for minimal $\Sigma$ at target fidelity.

\section{Conclusion}
We have established rigorous information-thermodynamic bounds for temperature-encoded computation in autonomous quantum thermal networks. By linking entropy production, finite-reservoir fluctuations, and channel capacity, we proved that computational fidelity, information rate, and network depth are fundamentally constrained by dissipation, with hard limits imposed by finite heat capacity. These results provide rigorous design limits for analog thermodynamic hardware and clarify the energetic architecture of far-from-equilibrium information processing.

\appendix

\section{ Proofs of Main Results}
\label{app:proofs}

In this appendix, we provide complete, step-by-step rigorous derivations of the three main theorems stated in the main text, addressing all mathematical consistency requirements.

\subsection{Preliminaries: Noise Decomposition and TUR Constraints}
\label{app:prelim}

The output temperature $\betaZInf$ of a thermodynamic neuron is a stochastic variable due to two independent sources of fluctuations:

\begin{enumerate}
    \item \textbf{Thermal noise from finite reservoir}: The output reservoir $\calB_z$ has finite heat capacity $C$, implying that its temperature fluctuates even at equilibrium. For a macroscopic reservoir with $N \gg 1$ degrees of freedom, the central limit theorem yields Gaussian fluctuations with variance
    \begin{equation}
        \sigma_{\mathrm{therm}}^2 = \frac{\kB \betaBar^2}{C},
        \label{eq:sigma_therm_app}
    \end{equation}
    where $\betaBar = (\betaCold+\betaHot)/2$ is the mean operating inverse temperature.

    \item \textbf{Dynamical noise from current fluctuations}: The heat currents $J_k(t)$ that drive the temperature evolution are stochastic due to the underlying quantum jump processes. Their integrated values $Q_k = \int_0^\tau J_k(t)\,\mathrm{d}t$ obey thermodynamic uncertainty relations (TURs)~\cite{TUR1,TUR2}:
    \begin{equation}
        \frac{\mathrm{Var}(Q_k)}{\langle Q_k\rangle^2} \geq \frac{2\kB}{\Sigma_k},
        \label{eq:TUR_single_app}
    \end{equation}
    where $\Sigma_k = k_B \int_0^\tau \beta_k J_k(t)\,\mathrm{d}t$ is the partial entropy production associated with bath $k$.
\end{enumerate}

The total entropy production is $\Sigma = \sum_k \Sigma_k$, and by the convexity of $x\mapsto 1/x$ (via the Cauchy-Schwarz inequality in fractional form, also known as Titu's Lemma or Sedrakyan's inequality), we have the collective bound
\begin{equation}
    \frac{\mathrm{Var}\big(\sum_k Q_k\big)}{\big\langle\sum_k Q_k\big\rangle^2} \geq \frac{2\kB}{\Sigma}.
    \label{eq:TUR_collective_app}
\end{equation}

The calorimetric relation $Q_{\mathrm{tot}} = C(T_z^\infty - T_z^0)$ maps heat fluctuations to temperature fluctuations. Linearizing around $\betaBar$ (valid for $\DeltaBeta \ll \betaBar$) and using $d\beta = -k_B \beta^2 dT$ gives
\begin{equation}
    \delta\betaZInf \approx -k_B\betaBar^2\,\delta T_z^\infty = -\frac{k_B\betaBar^2}{C}\,\delta Q_{\mathrm{tot}}.
    \label{eq:beta_Q_map_app}
\end{equation}

Combining Eqs.~\eqref{eq:TUR_collective_app} and~\eqref{eq:beta_Q_map_app}, the dynamical contribution to the output noise variance satisfies
\begin{equation}
    \sigma_{\mathrm{dyn}}^2 \geq \frac{k_B^2\betaBar^4}{C^2}\,\langle Q_{\mathrm{tot}}\rangle^2 \frac{2\kB}{\Sigma}.
    \label{eq:sigma_dyn_TUR_app}
\end{equation}

The mean heat transfer $\langle Q_{\mathrm{tot}}\rangle$ is set by the signal amplitude. For a logical transition from $\betaHot$ to $\betaCold$, using $T = 1/(k_B\beta)$, a first-order Taylor expansion gives the temperature change:
\begin{equation}
    \Delta T \approx \frac{\DeltaBeta}{k_B\betaBar^2}.
    \label{eq:delta_T_app}
\end{equation}
Hence, the mean heat transfer is
\begin{equation}
    \langle Q_{\mathrm{tot}}\rangle = C\,\Delta T \approx \frac{C\,\DeltaBeta}{k_B\betaBar^2}.
    \label{eq:Q_signal_app}
\end{equation}

Substituting Eq.~\eqref{eq:Q_signal_app} into Eq.~\eqref{eq:sigma_dyn_TUR_app} yields
\begin{equation}
    \sigma_{\mathrm{dyn}}^2 \geq \frac{k_B^2\betaBar^4}{C^2} \left( \frac{C\,\DeltaBeta}{k_B\betaBar^2} \right)^2 \frac{2\kB}{\Sigma} = \frac{2\kB\DeltaBeta^2}{\Sigma}.
    \label{eq:sigma_dyn_final_app}
\end{equation}

Since thermal and dynamical noises are independent, the total variance is additive:
\begin{equation}
    \sigma^2 = \sigma_{\mathrm{therm}}^2 + \sigma_{\mathrm{dyn}}^2 \geq \frac{\kB\betaBar^2}{C} + \frac{2\kB\DeltaBeta^2}{\Sigma}.
    \label{eq:sigma_total_app}
\end{equation}

\textbf{Condition for neglecting thermal noise:} The thermal noise term can be neglected when $\sigma_{\mathrm{dyn}}^2 \gg \sigma_{\mathrm{therm}}^2$, i.e., when
\begin{equation}
    \Sigma \ll \frac{2C\DeltaBeta^2}{\betaBar^2}.
\end{equation}
\textbf{Conversely, in the high-dissipation regime where $\Sigma \gg \frac{2C\DeltaBeta^2}{\betaBar^2}$, thermal noise dominates,} establishing the fundamental error floor $\epsMin$.

\subsection{Proof of Theorem 1: Error--Dissipation Trade-off}
\label{app:proof_error}

\begin{proof}
The decoding rule uses a single threshold at $\beta_{\mathrm{th}} = \betaBar = (\betaCold+\betaHot)/2$: $y=0$ if $\betaZInf < \betaBar$, $y=1$ otherwise. A logical error occurs when Gaussian noise pushes $\betaZInf$ across this boundary.

Consider input $x=0$ (encoded as $\betaHot$). The mean output is $\mu_0 = f(\betaV^{\mathrm{hot}}) \approx \betaHot$ for well-designed neurons. The error probability is
\begin{equation}
    \xi_0 = \mathbb{P}\big(\betaZInf > \betaBar \,\big|\, x=0\big) = \int_{\betaBar}^{\infty} \frac{1}{\sqrt{2\pi\sigma^2}}\,e^{-(\beta-\mu_0)^2/(2\sigma^2)}\,\mathrm{d}\beta.
\end{equation}
Changing variables to $u = (\beta-\mu_0)/\sigma$ and noting that $\betaBar - \mu_0 \approx \betaBar - \betaHot = +\DeltaBeta/2$ (since $\betaCold > \betaHot$, as the cold bath has a lower temperature and thus a higher inverse temperature), we obtain
\begin{equation}
    \xi_0 = \frac{1}{2}\,\erfc\!\left( \frac{\DeltaBeta/2}{\sqrt{2}\,\sigma} \right).
\end{equation}
Similarly, for $x=1$ with $\mu_1 \approx \betaCold$,
\begin{widetext}
\begin{equation}
    \xi_1 = \mathbb{P}\big(\betaZInf < \betaBar \,\big|\, x=1\big) = \frac{1}{2}\,\erfc\!\left( \frac{\betaCold - \betaBar}{\sqrt{2}\,\sigma} \right) = \frac{1}{2}\,\erfc\!\left( \frac{\DeltaBeta/2}{\sqrt{2}\,\sigma} \right).
\end{equation}
\end{widetext}
Since $\xi_0 = \xi_1$, the average error probability is
\begin{equation}
    \langle\xi\rangle = p(0)\xi_0 + p(1)\xi_1 = \xi_0 = \xi_1 = \frac{1}{2}\,\erfc\!\left( \frac{\DeltaBeta}{2\sqrt{2}\,\sigma} \right),
    \label{eq:error_exact_corrected_app}
\end{equation}
independent of the input distribution $p(x)$. This symmetry is a consequence of the symmetric encoding around $\betaBar$.

Now we bound $\sigma$ from below using Eq.~\eqref{eq:sigma_total_app}. Since $\erfc(x)$ is monotonically decreasing, a lower bound on $\sigma$ yields an upper bound on the argument and thus a \emph{lower} bound on $\langle\xi\rangle$:
\begin{equation}
    \langle\xi\rangle \geq \frac{1}{2}\,\erfc\!\left( \frac{\DeltaBeta}{2\sqrt{2}} \left[ \frac{\kB\betaBar^2}{C} + \frac{2\kB\DeltaBeta^2}{\Sigma} \right]^{-1/2} \right).
    \label{eq:error_with_sigma_app}
\end{equation}

Factoring out $\frac{\kB\betaBar^2}{C}$ from the denominator yields:
\begin{align}
    \langle\xi\rangle &\geq \frac{1}{2}\,\erfc\!\left( \frac{\DeltaBeta}{2\sqrt{2}} \left[ \frac{\kB\betaBar^2}{C} \left(1 + \frac{2C\DeltaBeta^2}{\betaBar^2 \Sigma}\right) \right]^{-1/2} \right) \nonumber\\
    &= \frac{1}{2}\,\erfc\!\left( \frac{\DeltaBeta\sqrt{C}}{2\sqrt{2\kB}\betaBar} \right. \nonumber\\
    &\quad \left. \times \left(1 + \frac{2C\DeltaBeta^2}{\betaBar^2 \Sigma}\right)^{-1/2} \right).
\end{align}
This completes the rigorous proof of Theorem 1. The minimum error floor $\epsMin$ is obtained by taking the limit $\Sigma \to \infty$, which eliminates the dynamical noise term, leaving only the irreducible thermal noise.
\end{proof}

\subsection{Proof of Theorem 2: Thermodynamic Channel Capacity}
\label{app:proof_capacity}

\begin{proof}
The thermodynamic neuron implements a continuous-output channel with input $x\in\{0,1\}^n$ and output $\betaZInf \in \mathbb{R}$. The conditional distribution is Gaussian:
\begin{equation}
    p(\betaZInf \mid x) = \calN\big(f(\betaV(x)), \sigma^2\big),
\end{equation}
where $\betaV(x) = \sum_k w_k \beta_k(x)$ is the virtual temperature and $\sigma^2$ satisfies the lower bound Eq.~\eqref{eq:sigma_total_app}.

The Gaussianity of $p(\betaZInf|x)$ follows from the macroscopic nature of the output reservoir. With $N \gg 1$ degrees of freedom, the central limit theorem ensures that energy fluctuations are Gaussian. Since temperature is related to energy through the calorimetric relation $E = CT$, and the heat currents driving the temperature evolution are sums of many independent quantum jumps, the resulting temperature distribution is well-approximated by a Gaussian.

The mutual information is $I(X;Y) = h(Y) - h(Y\mid X)$, where $h$ denotes differential entropy~\cite{CoverThomas,MacKay}. For a Gaussian channel with input-dependent mean but fixed noise variance, the conditional entropy is
\begin{equation}
    h(Y\mid X) = \frac{1}{2}\log_2(2\pi e \sigma^2).
\end{equation}
The output entropy $h(Y)$ is maximized when the output distribution is Gaussian, which occurs for binary inputs with symmetric encoding and a linear activation ($f(\betaV) \approx \betaV$). In this case, $Y$ is a mixture of two Gaussians with means separated by $\DeltaBeta_{\mathrm{eff}} = |f(\betaV^{\mathrm{cold}}) - f(\betaV^{\mathrm{hot}})| \approx \DeltaBeta$. The entropy of such a mixture is bounded by the entropy of a single Gaussian with the same variance:
\begin{equation}
    h(Y) \leq \frac{1}{2}\log_2\big(2\pi e (\sigma^2 + \DeltaBeta^2/4)\big),
\end{equation}
where $\DeltaBeta^2/4$ is the variance of the binary mean distribution.

Thus,
\begin{align}
    I(X;Y) &\leq \frac{1}{2}\log_2\!\left(1 + \frac{\DeltaBeta^2}{4\sigma^2}\right) \nonumber\\
    &= \frac{1}{2}\log_2\!\left(1 + \SNR\right),
    \label{eq:mutual_info_snr_app}
\end{align}
with signal-to-noise ratio $\SNR = \DeltaBeta^2/(4\sigma^2)$.

Now substitute the noise lower bound from Eq.~\eqref{eq:sigma_total_app}:
\begin{align}
    \SNR &\leq \frac{\DeltaBeta^2}{4} \left[ \frac{\kB\betaBar^2}{C} \left( 1 + \frac{2C\DeltaBeta^2}{\betaBar^2 \Sigma} \right) \right]^{-1} \nonumber\\
    &= \frac{C\DeltaBeta^2}{4\kB\betaBar^2} \left( 1 + \frac{2C\DeltaBeta^2}{\betaBar^2 \Sigma} \right)^{-1}.
\end{align}
Substituting this into Eq.~\eqref{eq:mutual_info_snr_app} yields the capacity bound:
\begin{equation}
    \calC_{\mathrm{th}} \leq \frac{1}{2}\log_2\!\left(1 + \frac{C\DeltaBeta^2}{4\kB\betaBar^2} \left[ 1 + \frac{2C\DeltaBeta^2}{\betaBar^2 \Sigma} \right]^{-1} \right).
\end{equation}
Taking the limit $\Sigma \to \infty$ yields the saturation bound $\calC_{\mathrm{th}} \leq \frac{1}{2}\log_2\left(1 + \frac{C\DeltaBeta^2}{4\kB\betaBar^2}\right)$.
\end{proof}

\subsection{Proof of Theorem 3: Deep Network Thermodynamic Overhead}
\label{app:proof_network}

\begin{proof}
We model the cascade as a sequence of $L$ thermodynamic channels, where the output of neuron $\ell$ serves as the input bath for neuron $\ell+1$. Each neuron has heat capacity $C$, operates over time $\tau$, and targets single-layer error $\epsilon_\ell$.

\textbf{Step 1: Error propagation.} Let $\epsilon_\ell$ be the error probability at layer $\ell$. By the union bound for independent rare events, the global error after $L$ layers is strictly bounded by:
\begin{equation}
    \epsNet \leq \sum_{\ell=1}^L \epsilon_\ell.
    \label{eq:error_additive_app}
\end{equation}

\textbf{Step 2: Per-layer dissipation bound.} From Theorem 1, to achieve a single-layer error $\epsilon_\ell$, the entropy production $\Sigma_\ell$ must satisfy:
\begin{equation}
    \epsilon_\ell \geq \frac{1}{2}\,\erfc\!\left( \frac{A}{\sqrt{1 + B/\Sigma_\ell}} \right),
    \label{eq:epsilon_layer_bound_app}
\end{equation}
where the constants $A$ and $B$ are defined as
\begin{equation}
    A = \frac{\DeltaBeta\sqrt{C}}{2\sqrt{2\kB}\betaBar}, \quad B = \frac{2C\DeltaBeta^2}{\betaBar^2}.
\end{equation}
Inverting this relation algebraically yields a strict lower bound on the dissipation per layer. Since $\erfc(x)$ is strictly decreasing, its inverse $\erfc^{-1}(x)$ is also strictly decreasing, which reverses the inequality:
\begin{align}
    2\epsilon_\ell &\geq \erfc\!\left( \frac{A}{\sqrt{1 + B/\Sigma_\ell}} \right) \nonumber\\
    \erfc^{-1}(2\epsilon_\ell) &\leq \frac{A}{\sqrt{1 + B/\Sigma_\ell}} \nonumber\\
    1 + \frac{B}{\Sigma_\ell} &\leq \left( \frac{A}{\erfc^{-1}(2\epsilon_\ell)} \right)^2 \nonumber\\
    \Sigma_\ell &\geq \frac{B}{\left( \frac{A}{\erfc^{-1}(2\epsilon_\ell)} \right)^2 - 1}.
    \label{eq:sigma_min_layer_app}
\end{align}
This bound diverges as $\epsilon_\ell \to \epsMin = \frac{1}{2}\erfc(A)$, establishing the fundamental minimum error floor.

\textbf{Step 3: Optimization over error allocation.} The total thermodynamic cost of the network is the sum of the entropy production of each layer, $\SigmaTot = \sum_{\ell=1}^L \Sigma_\ell$. Our goal is to find the most energy-efficient operational strategy, which corresponds to minimizing this total thermodynamic cost (the objective function) subject to the constraint that the global network error does not exceed the target fidelity $\epsNet$. To solve this constrained optimization problem, we use the method of Lagrange multipliers. The Lagrangian is
\begin{equation}
    \mathcal{L} = \sum_{\ell=1}^L \Sigma_\ell(\epsilon_\ell) + \lambda\left(\sum_{\ell=1}^L \epsilon_\ell - \epsNet\right).
\end{equation}
Because the function $\Sigma_\ell(\epsilon_\ell)$ is strictly convex for $\epsilon_\ell > \epsMin$ (this follows from the composition of the convex, increasing function $g(y) = \frac{B y^2}{A^2 - y^2}$ and the convex, decreasing function $y(\epsilon) = \erfc^{-1}(2\epsilon)$ for $\epsilon < 0.5$), the global minimum of the total dissipation is uniquely achieved at a uniform error allocation: $\epsilon_\ell = \epsNet/L$ for all $\ell$. This uniform distribution of the error budget minimizes the total heat dissipated by the network.

\textbf{Step 4: Total dissipation scaling.} Substituting $\epsilon_\ell = \epsNet/L$ into Eq.~\eqref{eq:sigma_min_layer_app} and summing over $L$ layers gives the rigorous lower bound:

\begin{equation}
\begin{split}
    \SigmaTot \geq L \cdot \frac{B}{\left( \frac{A}{\erfc^{-1}(2\epsNet/L)} \right)^2 - 1}.
\end{split}
\label{eq:Sigma_tot_exact_app}
\end{equation}

\textbf{Step 5: Asymptotic scaling and critical depth divergence.} The per-layer dissipation is given by Eq.~\eqref{eq:sigma_min_layer_app}, which can be rewritten as $\Sigma_\ell \ge \frac{B y^2}{A^2 - y^2}$, where $y = \erfc^{-1}(2\epsNet/L)$. For a fixed total error $\epsNet$ and network depths $L$ well below the critical maximum depth $L_{\max} = \epsNet/\epsMin$, the argument of the inverse complementary error function is bounded away from $\erfc(A)$, meaning $y$ is a finite constant $y_0 < A$. In this practical regime, $\Sigma_\ell$ is approximately constant, yielding a linear scaling of the total dissipation with network depth:
\begin{equation}
    \SigmaTot = L \Sigma_\ell \sim \mathcal{O}(L).
    \label{eq:Sigma_tot_linear_app}
\end{equation}
However, as the network depth increases and approaches the critical maximum depth $L_{\max}$, the per-layer error $\epsNet/L$ approaches the fundamental minimum error floor $\epsMin$. In this limit, $y \to A$, and the denominator $A^2 - y^2$ approaches zero, causing the required dissipation to diverge. To find the exact asymptotic behavior~\cite{DLMF}, we let $L = L_{\max} - \Delta L$ for small $\Delta L > 0$. Expanding $y$ around $A$ yields $A - y \approx K \Delta L$, where $K = \sqrt{\pi} e^{A^2} \epsMin / L_{\max}$. Consequently, $A^2 - y^2 \approx 2A K \Delta L \propto (L_{\max} - L)$. This leads to the rigorous asymptotic divergence:
\begin{equation}
    \SigmaTot \approx L_{\max} \Sigma_\ell \sim \mathcal{O}\left( \frac{L^2}{L_{\max} - L} \right).
    \label{eq:Sigma_tot_divergence_app}
\end{equation}
This divergence physically reflects the impossibility of achieving the target global error $\epsNet$ for $L \ge L_{\max}$, regardless of the available thermodynamic resources.

\textbf{Step 6: Worst-case $\mathcal{O}(L^3)$ bound under strong noise amplification.} 
In scenarios with significant noise correlation or amplification (distinct from the weak-coupling expansion above), the accumulated noise can scale as $\sigma_\ell^2 \sim \mathcal{O}(\ell^2)$. This requires dissipation $\Sigma_\ell \propto \sigma_\ell^2 \sim \mathcal{O}(\ell^2)$ to maintain fixed error, yielding:
\begin{equation}
    \SigmaTot = \sum_{\ell=1}^L \Sigma_\ell \sim \sum_{\ell=1}^L \mathcal{O}(\ell^2) = \mathcal{O}(L^3).
\end{equation}
This represents a rigorous worst-case upper bound applicable when noise amplification dominates over independent layer errors.
\end{proof}


\begin{thebibliography}{99}
\bibitem{thermocomp1} M.~Esposito and C.~V. den Broeck, Phys. Rev. Lett. \textbf{104}, 090601 (2010).
\bibitem{Esposito} M.~Esposito and C.~V. den Broeck, EPL {\bf 95} 40004 (2011).
\bibitem{Ueda} T. Sagawa, and M. Ueda, Phys. Rev. E {\bf 85}, 021104 (2012).
\bibitem{thermocomp2} J.~M.~Horowitz and M.~Esposito, Phys. Rev. X \textbf{4}, 031015 (2014).
\bibitem{thermocomp3} P.~Strasberg and M.~Esposito, Phys. Rev. Lett. \textbf{123}, 180604 (2019).
\bibitem{Wolpert} D. H. Wolpert, J. Phys. A: Math. Theor. {\bf 52} 193001 (2019).
\bibitem{Chattopadhyay2025} P.~Chattopadhyay, A.~Misra, T.~Pandit, and G.~Paul, Landauer Principle and Thermodynamics of Computation, Rep. Prog. Phys. \textbf{88}, 086001 (2025).
\bibitem{lipkabartosik2024} P.~Lipka-Bartosik, M.~Perarnau-Llobet, and N.~Brunner, Sci. Adv. \textbf{10}, eadm8792 (2024).
\bibitem{landauer} R.~Landauer, IBM J. Res. Dev. \textbf{5}, 183 (1961).
\bibitem{GarciaPintos2025} L.~P.~García-Pintos, T.~O'Leary, T.~Biswas, J.~Bringewatt, \emph{et al.}, Resilience--runtime tradeoff relations for quantum algorithms, Rep. Prog. Phys. \textbf{88}, 037601 (2025).
\bibitem{TUR1} A.~C.~Barato and U.~Seifert, Phys. Rev. Lett. \textbf{114}, 158101 (2015).
\bibitem{TUR2} J.~M.~Horowitz and T.~R. Gingrich, Nat. Phys. \textbf{16}, 15 (2020).
\bibitem{CoverThomas} T.~M.~Cover and J.~A.~Thomas, \emph{Elements of Information Theory}, 2nd ed. (Wiley, 2006).
\bibitem{MacKay} D.J.C. MacKay, \emph{Information Theory, Inference, and Learning Algorithms} (Cambridge University Press, 2003).
\bibitem{Karimi2016} B.~Karimi and J.~P.~Pekola, Phys. Rev. B \textbf{94}, 184503 (2016).
\bibitem{Rossnagel2016} J.~Ro{\ss}nagel \emph{et al.}, Science \textbf{352}, 325--329 (2016).
\bibitem{Sothmann2015} B.~Sothmann, R.~S{\'a}nchez, and A.~N.~Jordan, Nanotechnology \textbf{26}, 032001 (2015).
\bibitem{DLMF} F.~W.~J. Olver, D. W. Lozier, R. Boisvert, C. W. Clark \emph{NIST Handbook of Mathematical Functions} (Cambridge University Press, 2010).
\end{thebibliography}
\end{document}